\DeclareMathOperator*{\E}{\mathbb{E}}
\newcommand{\LSH}{\mathcal{H}}
\newcommand{\cube}[1]{\{-1,1\}^{#1}}
\newcommand{\ip}[2]{\langle{#1},{#2}\rangle}
\newtheorem{theorem}{Theorem}
\newtheorem{lemma}[theorem]{Lemma}
\theoremstyle{definition}
\newtheorem{definition}[theorem]{Definition}
\theoremstyle{remark}
\begin{document}

\title{Optimal Boolean Locality-Sensitive Hashing}

\author{
	Tobias Christiani\\
	\small \texttt{tobc@itu.dk}\\
	\small IT University of Copenhagen and BARC
}
\maketitle

\begin{abstract}
For $0 \leq \beta < \alpha < 1$ the distribution $\LSH$ over Boolean functions $h \colon \cube{d} \to \{-1, 1\}$ that minimizes the expression 
\begin{equation*} 
	\rho_{\alpha, \beta} = \frac{\log(1/\Pr_{\substack{h \sim \LSH \\ (x, y) \text{ $\alpha$-corr.}}}[h(x) = h(y)])}{\log(1/\Pr_{\substack{h \sim \LSH \\ (x, y) \text{ $\beta$-corr.}}}[h(x) = h(y)])}
\end{equation*}
assigns nonzero probability only to members of the set of dictator functions $h(x) = \pm x_i$. 
\end{abstract}

\section{Introduction}
We will be studying Boolean functions, i.e., functions that for a positive integer $d$ can be written in the form
\begin{equation*}
	h \colon \{-1, 1\}^d \to \{-1, 1\}.
\end{equation*}
We are concerned with the behavior of such Boolean functions on input pairs $x, y \in \cube{d}$ that are randomly generated.
\begin{definition}\label{bool:def:correlation}
	For $-1 \leq \alpha \leq 1$ and $x \in \cube{d}$ we let $N_{\alpha}(x)$ denote the distribution over $\cube{d}$ where each component of $y \sim N_{\alpha}(x)$ is independently distributed according to 
	\begin{equation*}
		y_i =
		\begin{cases}
			x_i & \text{with probability } \frac{1 + \alpha}{2}, \\
			-x_i & \text{with probability } \frac{1 - \alpha}{2}. 
		\end{cases}
	\end{equation*}
	We say that $(x, y)$ is randomly $\alpha$-correlated if $x$ is uniformly distributed over $\cube{d}$ and $y \sim N_{\alpha}(x)$.
\end{definition}
Let $\LSH$ denote a distribution over functions $h \colon \{-1, 1\}^d \to R$ where $R$ is a finite set and define 
\begin{equation*}
	p_\alpha = \Pr_{\substack{h \sim \LSH \\ (x, y) \text{ $\alpha$-corr.}}}[h(x) = h(y)].
\end{equation*}
For $0 \leq \beta < \alpha \leq 1$ we wish to characerize the distributions that minimize the expression
\begin{equation} \label{eq:rho}
	\rho_{\alpha, \beta} = \frac{\log(1/p_\alpha)}{\log(1/p_\beta)} 
\end{equation}
when we restrict $\LSH$ to be a distribution over Boolean functions $h \colon \cube{d} \to \{-1, 1\}$.
The expression for $\rho_{\alpha, \beta}$ in equation \eqref{eq:rho} is a well-known quantity in the study of approximate near neighbor search governing the query time and space usage of solutions based on locality-sensitive hashing~\cite{indyk1998}.
\section{Related work}
Indyk and Motwani~\cite{indyk1998} introduced the uniform distribution over the set of dictator functions as a family of locality-sensitive hash functions for the Boolean hypercube.
O'Donnell et al.~\cite{odonnell2014optimal} showed that for general families $\LSH$ it must hold that $\rho_{\alpha, \beta} \geq \log(1/\alpha) / \log(1/\beta)$. 
This matches the upper bound of Indyk and Motwani~\cite{indyk1998} when $\alpha, \beta$ approach $1$. 
Another line of work\cite{panigrahy2008geometric, andoni2016tight} using hypercontractive inequalities showed that $\rho_{\alpha, 0} \geq (1-\alpha)/(1+\alpha)$, 
matching the upper bound of Andoni et al.~\cite{andoni2015optimal}.

The question of finding lower bounds for $\rho_{\alpha, \beta}$ for every choice of $0 \leq \beta < \alpha \leq 1$ is still open. 
In this note we answer the question for distributions over \emph{Boolean} functions, showing that the upper bound of Indyk and Motwani is optimal.
The resulting $\rho$-value is given by
\begin{equation*} 
	\rho_{\alpha, \beta} = \frac{\log((1 + \alpha)/2)}{\log((1 + \beta)/2)}. 
\end{equation*}
\section{Preliminaries}
We will be using tools from the Fourier analysis of Boolean functions to find the minimum of $\rho_{\alpha, \beta}$.
For a more detailed overview we refer to the book by O'Donnell~\cite{odonnell2014analysis}.
We will be using the fact that Boolean functions can be uniquely expressed as multilinear polynomials:
\begin{theorem}
	Every function $f \colon \cube{d} \to \mathbb{R}$ can be uniquely expressed as a multilinear polynomial
	\begin{equation*}
		f(x) = \sum_{S \subseteq [d]}\hat{f}(S)x^S
	\end{equation*}
	where $\hat{f}(S) \in \mathbb{R}$ and $x^S = \prod_{i \in S}x_i$. 
\end{theorem}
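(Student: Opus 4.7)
The plan is to prove both existence and uniqueness of the multilinear representation, treating the space of real-valued functions on $\cube{d}$ as a $2^d$-dimensional real vector space and the monomials $\{x^S\}_{S \subseteq [d]}$ as a candidate basis of size exactly $2^d$.

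For existence, I would give an explicit construction via indicator polynomials. For each point $a \in \cube{d}$, observe that the function that is $1$ at $a$ and $0$ elsewhere can be written as
\begin{equation*}
\mathbf{1}_{\{x = a\}}(x) = \prod_{i=1}^{d} \frac{1 + a_i x_i}{2},
\end{equation*}
which, once expanded, is a multilinear polynomial in $x_1, \ldots, x_d$ with real coefficients (no $x_i^2$ terms arise because each factor is linear in $x_i$). Then the identity
\begin{equation*}
f(x) = \sum_{a \in \cube{d}} f(a)\,\mathbf{1}_{\{x = a\}}(x)
\end{equation*}
expresses $f$ as a sum of multilinear polynomials, hence as a multilinear polynomial, and collecting coefficients yields the desired form $\sum_S \hat{f}(S)\, x^S$.

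For uniqueness, I would use orthogonality with respect to the inner product $\langle f, g\rangle = \E_{x}[f(x)g(x)]$, where $x$ is uniform on $\cube{d}$. Since $x_i^2 = 1$ on $\cube{d}$, we have $x^S x^T = x^{S \triangle T}$, and because coordinates of a uniform $x$ are independent with $\E[x_i] = 0$, it follows that $\E[x^{S \triangle T}] = \mathbf{1}[S = T]$. Thus $\{x^S\}_{S \subseteq [d]}$ is an orthonormal system, hence linearly independent; together with the existence argument this shows they form a basis, so the coefficients $\hat{f}(S)$ are uniquely determined (and in fact equal to $\langle f, x^S\rangle$).

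There is no substantive obstacle here; the statement is a standard Fourier-analytic fact. The only genuine choice is whether to derive existence via the explicit indicator construction (as above) or by a dimension-count argument that leverages the uniqueness half first, namely: since the $2^d$ monomials are linearly independent in the $2^d$-dimensional function space, they automatically span. I prefer the explicit construction because it also yields the inversion formula $\hat{f}(S) = \E_x[f(x) x^S]$ that will be convenient in later sections.
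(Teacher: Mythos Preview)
Your proof is correct and is exactly the standard argument (indicator interpolation for existence, orthonormality of the characters $x^S$ for uniqueness). Note, however, that the paper does not actually prove this theorem: it is stated in the Preliminaries section as a known fact, with a pointer to O'Donnell's book~\cite{odonnell2014analysis}. So there is nothing to compare against---your write-up simply fills in a proof the paper chose to omit.
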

For $S \subseteq [d]$ we refer to $\hat{f}(S)$ as the Fourier coefficient of $f$ on $S$. 
The two following Theorems define an inner product between Boolean function and shows how it relates to their Fourier coefficents.  
\begin{theorem}[Plancherel's Theorem]
	For any $f, g \colon \cube{d} \to \mathbb{R}$
	\begin{equation*}
		\ip{f}{g} = \E_{x \sim \cube{d}}[f(x)g(x)] = \sum_{S \subseteq [d]}\hat{f}(S)\hat{g}(S).
	\end{equation*}
\end{theorem}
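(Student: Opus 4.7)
The plan is to reduce Plancherel's theorem to the orthonormality of the monomial basis $\{x^S : S \subseteq [d]\}$ with respect to the inner product $\ip{\cdot}{\cdot} = \E_{x \sim \cube{d}}[\,\cdot\,]$. Once this orthonormality is in hand, the identity follows by expanding $f$ and $g$ in their unique multilinear representations (guaranteed by the preceding theorem) and distributing the product, so the only nontrivial step is the character orthogonality.

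First I would establish that $\E_x[x^S\, x^T] = \mathbf{1}[S = T]$ for all $S, T \subseteq [d]$. Since $x_i^2 = 1$ pointwise on $\cube{d}$, the product $x^S x^T$ collapses to $x^{S \triangle T}$. When $x$ is uniform on $\cube{d}$ its coordinates are independent mean-zero Rademacher variables, so $\E_x[x^U] = \prod_{i \in U} \E_x[x_i]$ equals $1$ if $U = \emptyset$ and $0$ otherwise. Setting $U = S \triangle T$ gives the required orthogonality, and notice this uses nothing beyond independence of coordinates under the uniform distribution.

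Second, invoking the preceding theorem I would write $f(x) = \sum_{S} \hat{f}(S)\, x^S$ and $g(x) = \sum_{T} \hat{g}(T)\, x^T$. Multiplying the two expansions, interchanging the (finite) sum with the expectation, and substituting the orthogonality relation yields
\[
    \E_x[f(x) g(x)] \;=\; \sum_{S,T} \hat{f}(S)\hat{g}(T)\, \E_x\!\left[x^S x^T\right] \;=\; \sum_{S} \hat{f}(S)\hat{g}(S),
\]
which is the claimed identity. There is no serious obstacle here: Plancherel is essentially a repackaging of the fact that the parity characters form an orthonormal basis of the space of real-valued functions on $\cube{d}$, and the only computational content is the mean-zero/independence calculation for Rademacher coordinates. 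The uniqueness clause of the multilinear representation theorem is what ensures that the coefficients $\hat{f}(S), \hat{g}(S)$ appearing on the right-hand side are well-defined in the first place.
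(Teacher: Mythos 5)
Your proof is correct and is exactly the standard argument the paper implicitly relies on (it states this theorem without proof, deferring to O'Donnell's book): character orthonormality $\E_x[x^S x^T] = \mathbf{1}[S = T]$ via $x_i^2 = 1$ and independence of the Rademacher coordinates, followed by bilinear expansion of $f$ and $g$ in the multilinear basis. Nothing to add.
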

The concept of Fourier weight will be useful when characterizing the how Boolean functions behave on noisy inputs:
\begin{definition} \label{def:fourier_weight}
	For $f \colon \cube{d} \to R$ define the Fourier weight of $f$ at degree $k \in [d]$ by
	\begin{equation*}
		W^{k}[f] = \sum_{\substack{S \subseteq [d] \\ |S| = k}}\hat{f}(S)^2.
	\end{equation*}
\end{definition}
Consider Plancherel's Theorem with $f = g$ and where $f$ is Boolean-valued.
In this case we get that the sum of the squared Fourier coefficients of $f$ equals 1. 
This result is known as Parseval's Theorem and we will make use of it to determine where to place to Fourier weight of $f$ in order to minimize $\rho$.  
\begin{theorem}[Parseval's Theorem]
	For any $f \colon \cube{d} \to \{-1, 1\}$ 
	\begin{equation*}
		\ip{f}{f} = \E_{x \sim \cube{d}}[f(x)^2] = \sum_{S \subseteq [d]}\hat{f}(S)^2 = \sum_{i = 0}^{d}W^{i}[f] = 1.
	\end{equation*}
\end{theorem}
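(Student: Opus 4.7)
The plan is to derive each equality in the chain from the previously stated tools, with no new machinery required. The first equality, $\ip{f}{f} = \E_{x \sim \cube{d}}[f(x)^2]$, is just the definition of the inner product stated in Plancherel's Theorem, taking $g = f$. The second equality, $\E_{x \sim \cube{d}}[f(x)^2] = \sum_{S \subseteq [d]} \hat{f}(S)^2$, is immediate from Plancherel's Theorem applied with $g = f$, since $\hat{f}(S)\hat{g}(S) = \hat{f}(S)^2$ in that case.

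Next I would group the Fourier coefficients by the size of their supporting set. Partitioning the power set of $[d]$ by cardinality gives
\begin{equation*}
	\sum_{S \subseteq [d]} \hat{f}(S)^2 = \sum_{i=0}^{d} \sum_{\substack{S \subseteq [d] \\ |S| = i}} \hat{f}(S)^2 = \sum_{i=0}^{d} W^{i}[f],
\end{equation*}
where the last step is Definition~\ref{def:fourier_weight}. This handles the third equality.

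For the final equality, I would exploit the Boolean-valued hypothesis $f(x) \in \{-1, 1\}$, which forces $f(x)^2 = 1$ pointwise on $\cube{d}$. Then $\E_{x \sim \cube{d}}[f(x)^2] = \E_{x \sim \cube{d}}[1] = 1$, closing the chain back to the leftmost quantity. Chaining these four elementary steps yields the stated identities.

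The proof has no genuine obstacle: it is a direct corollary of Plancherel's Theorem together with the pointwise identity $f^2 \equiv 1$ on the Boolean cube. The only thing to be careful about is making the chain of equalities read in a single direction so that the composite statement (including the value $1$) is justified; the Boolean-valued assumption is what distinguishes Parseval's Theorem from the more general Plancherel statement, and it enters exactly at the step that evaluates the expectation.
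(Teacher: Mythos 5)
Your proof is correct and is exactly the standard derivation: the paper states Parseval's Theorem without proof (citing it as a known preliminary), and your argument — Plancherel with $g = f$, regrouping the squared coefficients by $|S|$ via Definition~\ref{def:fourier_weight}, and using $f(x)^2 \equiv 1$ for Boolean-valued $f$ — is precisely the intended justification.
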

In order to study the behavior of Boolean functions under noise we introduce the noise operator $T_\alpha$.
\begin{definition}
	For $\alpha \in [-1,1]$ the noise operator with parameter $\alpha$ is the linear operator $T_{\alpha}$ on functions $f \colon \cube{d} \to \mathbb{R}$ defined by 
	\begin{equation*}
		T_{\alpha}f(x) = \E_{y \sim N_{\alpha}(x)}[f(y)].
	\end{equation*}
\end{definition}
The Fourier expansion of $T_{\alpha}f(x)$ is given by $\sum_{S \subseteq [d]}\alpha^{|S|}\hat{f}(S)x^S$. From Plancherel's Theorem it follows that
\begin{align} 	
	\ip{f}{T_\alpha g} &= \E_{x \sim \cube{d}}[f(x) \E_{y \sim N_{\alpha}(x)}[g(y)]] \nonumber \\
	&= \E_{\substack{(x, y) \text{ $\alpha$-corr.}}}[f(x)g(y)] = \sum_{S \subseteq [d]}\alpha^{|S|}\hat{f}(S)\hat{g}(S). \label{eq:noise_plancherel}
\end{align}
In the analysis of our problem the following inequality will be used several times.
For the remainder of this Chapter we will use $\log x$ to denote the natural logarithm of $x$.
\begin{lemma}\label{lem:log_inequality}
	For $x > 0$ we have $\log x \leq x - 1$ with equality if and only if $x = 1$.
\end{lemma}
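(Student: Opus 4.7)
The plan is to use a direct calculus argument on the auxiliary function $g(x) = x - 1 - \log x$ defined on $(0, \infty)$ and show $g(x) \geq 0$ with equality only at $x = 1$, which is exactly the claim rearranged.

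First I would differentiate: $g'(x) = 1 - 1/x$. This derivative is strictly negative on $(0, 1)$, vanishes at $x = 1$, and is strictly positive on $(1, \infty)$. Consequently $g$ is strictly decreasing on $(0, 1)$ and strictly increasing on $(1, \infty)$, so it attains a unique global minimum at $x = 1$. Evaluating there gives $g(1) = 1 - 1 - \log 1 = 0$. Combining these two facts yields $g(x) \geq 0$ for all $x > 0$, with equality if and only if $x = 1$, which rearranges to the desired inequality.

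There is essentially no obstacle here; this is a textbook inequality. If one prefers a proof without explicit use of the second-order behavior, an equivalent route is to invoke strict concavity of $\log$ on $(0, \infty)$ (since $(\log)''(x) = -1/x^2 < 0$) together with the tangent-line characterization: the tangent to $\log$ at $x = 1$ has slope $(\log)'(1) = 1$ and passes through $(1, 0)$, giving the line $y = x - 1$, and a strictly concave function lies strictly below its tangent line except at the point of tangency. Either formulation delivers both the inequality and the equality condition in a single step.
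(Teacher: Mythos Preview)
Your proof is correct and entirely standard. The paper itself states this lemma without proof (treating it as a well-known elementary inequality), so there is no argument in the paper to compare against; your calculus argument via $g(x)=x-1-\log x$ is exactly the textbook justification one would expect.
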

\section{Bit-sampling is optimal}
Our approach will be to minimize $\rho_{\alpha, \beta}$ subject to the constraint that members of $\LSH$ are Boolean functions $h \colon \cube{d} \to \{-1, 1\}$.
We begin by making some observations to simplify the problem.
For $h \sim \LSH$ we can directly relate the noise-sensitivity under random $\alpha$-correlated inputs to the collision probability.
\begin{align*}
	\E_{\substack{h \sim \LSH \\ (x, y) \text{ $\alpha$-corr.}}}[h(x)h(y)] 
	&= \Pr_{\substack{h \sim \LSH \\ (x, y) \text{ $\alpha$-corr.}}}[h(x) = h(y)] - \Pr_{\substack{h \sim \LSH \\ (x, y) \text{ $\alpha$-corr.}}}[h(x) \neq h(y)] \\
	&= p_\alpha - (1 - p_\alpha) \\
	&= 2p_\alpha - 1.
\end{align*}
Using Equation \eqref{eq:noise_plancherel} we can write $p_\alpha$ as follows:
\begin{equation*}
p_\alpha = (1 + \E_{\substack{h \sim \LSH \\ (x, y) \text{ $\alpha$-corr.}}}[h(x)h(y)])/2 = (1 + \sum_{i = 0}^{d}\alpha^{i} w_i)/2 
\end{equation*}
where we use $w_i$ to denote the expected Fourier weight of $h \sim \LSH$ at degree $i$ defined by $w_i = \E_{h \sim \LSH} \sum_{i = 0}^{d}W^{i}[h]$.
From Plancherel's Theorem we have that $\sum_{i = 0}^{d}w_i = 1$. 
We will now consider how to set $w_{0}, w_{1}, \dots, w_d$ to minimize the expression
\begin{equation*}
	\rho_{\alpha, \beta} = \frac{\log((1 + \sum_{i = 0}^{d}\alpha^{i} w_i)/2)}{\log((1 + \sum_{i = 0}^{d}\beta^{i} w_i)/2)}. 
\end{equation*}
An optimal solution $w^{*}_{0}, \dots, w^{*}_d$ for this problem will yield an optimal solution to the original problem,
provided there actually exists a Boolean-valued function satisfying the weight assignment.
We will show that the assignment $w^{*}_{1} = 1$ and $w^{*}_i = 0$ for $i \neq 1$ minimizes $\rho_{\alpha, \beta}$.
The distribution $\LSH$ therefore only assigns positive probability to functions $h$ that have all their Fourier weight concentrated at degree $1$. 
It turns out that a Boolean function satisfies this weight assignment if and only if it is a dictator function. 
Lemma \ref{lem:dictator} is well-known and is the answer to exercise 1.19 in~\cite{odonnell2014analysis}. 
We include the proof for completeness.
\begin{lemma} \label{lem:dictator}
	Let $f \colon \{-1,1\}^d \to \{-1, 1\}$ and suppose that $W^{1}[f] = 1$, then $f(x) = \pm x_i$.
\end{lemma}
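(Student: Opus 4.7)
The plan is to use Parseval's Theorem to reduce $f$ to a linear form, and then exploit the fact that $f$ is $\pm 1$-valued to force all but one Fourier coefficient to vanish.

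First I would invoke Parseval: $\sum_{i=0}^{d} W^i[f] = 1$, so the hypothesis $W^1[f] = 1$ immediately forces $W^i[f] = 0$ for every $i \neq 1$, hence $\hat{f}(S) = 0$ for every $S$ with $|S| \neq 1$. Writing $c_i := \hat{f}(\{i\})$, the multilinear representation then collapses to
\begin{equation*}
    f(x) = \sum_{i=1}^{d} c_i x_i, \qquad \sum_{i=1}^{d} c_i^2 = 1.
\end{equation*}

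Next I would use the fact that $f$ is Boolean, so $f(x)^2 = 1$ for every $x \in \cube{d}$. Expanding and using $x_i^2 = 1$ gives
\begin{equation*}
    f(x)^2 = \sum_{i} c_i^2 + 2 \sum_{i < j} c_i c_j x_i x_j = 1 + 2 \sum_{i < j} c_i c_j x_i x_j,
\end{equation*}
so the condition $f(x)^2 \equiv 1$ amounts to the multilinear polynomial $\sum_{i < j} c_i c_j x_i x_j$ vanishing identically on $\cube{d}$. By the uniqueness of the multilinear representation (the first Theorem quoted in the preliminaries), this forces $c_i c_j = 0$ for every $i \neq j$.

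The only remaining step is to combine $c_i c_j = 0$ for $i \neq j$ with $\sum_i c_i^2 = 1$: at most one coefficient is nonzero, and that one must satisfy $c_i^2 = 1$, i.e.\ $c_i = \pm 1$. Hence $f(x) = \pm x_i$. I do not foresee a serious obstacle; the only place one might stumble is in justifying that $\sum_{i<j} c_i c_j x_i x_j = 0$ on $\cube{d}$ implies all pairwise products vanish, but this is exactly the uniqueness clause of the multilinear expansion theorem, which we are entitled to use.
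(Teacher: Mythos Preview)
Your argument is correct. Both you and the paper start identically: Parseval forces $\hat f(S)=0$ for $|S|\neq 1$, so $f(x)=\sum_i c_i x_i$ with $\sum_i c_i^2=1$. Where you diverge is in extracting the single nonzero coefficient. You square $f$, expand, and invoke uniqueness of the multilinear representation to conclude $c_ic_j=0$ for $i\neq j$. The paper instead picks some $j$ with $c_j\neq 0$, freezes the other $d-1$ coordinates, and observes that the two values $c_j + \sum_{i\neq j}c_ix_i$ and $-c_j + \sum_{i\neq j}c_ix_i$ must both lie in $\{-1,1\}$; since they differ by $2c_j\neq 0$ they must be $+1$ and $-1$, forcing $c_j=\pm 1$ and hence (via $\sum c_i^2=1$) all other $c_i=0$.

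Your route is cleaner and more systematic: it reuses the uniqueness theorem already stated in the preliminaries rather than an ad hoc coordinate-fixing argument, and it generalizes more readily (the same squaring trick shows that any Boolean $f$ with all weight on a single degree $k$ must be $\pm x^S$ for some $|S|=k$). The paper's argument is slightly more elementary in that it avoids expanding a product, but it is also a bit terser about why ``$f(x)=\pm 1$ for $x_j=\pm 1$'' pins down $c_j$. Either proof is fine here.
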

\begin{proof}
	From Parseval's Theorem we know that $\sum_i W^{i}[f] = 1$ and it follows that $\hat{f}(S) = 0$ for $|S| \neq 1$. 
	The function $f$ can therefore be written in the form $f(x) = \sum_{i = 1}^{d}\hat{f}_i x_i$ where $\hat{f}_i = \hat{f}(S)$ for $S = \{i\}$.
	By the condition $W^{1}[f] = 1$ there exists $j \in [d]$ such that $\hat{f}_j \neq 0$. 
	Fix the $d-1$ components $x_{i \neq j}$ of $x$ and note that since $f$ maps to $\{-1, 1\}$ the sum $f(x) = \hat{f}_j x_j + \sum_{i \neq j}\hat{f}_i x_i$ must satisfy $f(x) = \pm1$ when $x_j = \pm1$.
	For $\hat{f}_j \neq 0$ this is only possible when $\hat{f}_j = \pm 1$ which implies that $\hat{f}_{i} = 0$ for $i \neq j$.
	It follows that $f$ must be one of the $2d$ functions of the form $f(x) = \pm x_i$.
\end{proof}
\subsection{Optimal Fourier weight at degree zero}
We begin by arguing that we can restrict our attention to showing that dictator functions are optimal in the case where $0 < \beta < \alpha < 1$.
If $\alpha = 1$ then for $w_{1} = 1$ we have that $\rho = 0$ which is the best we can hope for (but this could also be achieved by other weight assignments, hence the statement of the main theorem is for $\alpha < 1$.).
For $\beta = 0$ the following Lemma showing that $w_{0}^{*} = 0$ combined with the fact that for this setting we maximize $p_\alpha$ by setting $w_{1} = 1$ shows that the dictator functions are optimal.
We will now show that an optimal solution has no Fourier weight at degree zero.
\begin{lemma} \label{lem:w_0}
	$w_{0}^{*} = 0$.
\end{lemma}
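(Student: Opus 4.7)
The plan is to show that whenever $w_0 > 0$, the value of $\rho_{\alpha,\beta}$ can be strictly decreased by shifting weight from degree~$0$ to degree~$1$. Concretely, fix an assignment $(w_0, w_1, \dots, w_d)$ with $w_0 > 0$ and at least one $w_i > 0$ for some $i \geq 1$ (else $p_\alpha = p_\beta = 1$ and $\rho$ is undefined), and consider the one-parameter family $w_0(t) = w_0 - t$, $w_1(t) = w_1 + t$, with the other coordinates unchanged, for $t \in [0, w_0]$. Each member is a valid probability distribution. I will show $d\rho/dt < 0$ along the entire path, which yields $\rho(t = w_0) < \rho(t = 0)$ and thus $w_0^{*} = 0$ at any optimum.

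Because $p_\alpha(t) = p_\alpha(0) - t(1-\alpha)/2$ and likewise for $p_\beta$, applying the quotient rule to $\rho = \log p_\alpha / \log p_\beta$ and using $\log p_\alpha, \log p_\beta < 0$ shows that $d\rho/dt < 0$ is equivalent to
\begin{equation*}
	(1-\beta)\, p_\alpha \log(1/p_\alpha) \;>\; (1-\alpha)\, p_\beta \log(1/p_\beta).
\end{equation*}
The problem reduces to establishing this inequality for every valid weight vector with $p_\alpha > p_\beta$ and $p_\alpha < 1$.

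The key step is to discharge both logarithms using Lemma~\ref{lem:log_inequality}. First, the bound $\log(p_\alpha/p_\beta) \leq p_\alpha/p_\beta - 1$ (strict because $p_\alpha > p_\beta$) gives $p_\beta \log(1/p_\beta) < p_\beta \log(1/p_\alpha) + (p_\alpha - p_\beta)$, so it suffices to show $(1-\alpha)(p_\alpha - p_\beta) \leq [(1-\beta) p_\alpha - (1-\alpha) p_\beta]\, \log(1/p_\alpha)$. The bracketed factor is positive, since $p_\alpha/p_\beta > 1 > (1-\alpha)/(1-\beta)$, so a second application of the log inequality in the form $\log(1/p_\alpha) \geq 1 - p_\alpha$ reduces the problem, after expansion and division by $p_\alpha$, to the purely algebraic inequality
\begin{equation*}
	\frac{1 - p_\alpha}{1 - p_\beta} \;\geq\; \frac{1-\alpha}{1-\beta}.
\end{equation*}

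The main obstacle is spotting this reduction in the right form; once it is in hand, the inequality falls out from Fourier expansion. Writing $1 - p_\gamma = \tfrac{1}{2}\sum_{i \geq 1} w_i (1 - \gamma^i)$ and factoring $1 - \gamma^i = (1-\gamma)(1 + \gamma + \dots + \gamma^{i-1})$ turns the claim into the term-by-term statement $\sum_{k=0}^{i-1} \alpha^k \geq \sum_{k=0}^{i-1} \beta^k$ for each $i \geq 1$, which is immediate from $\alpha \geq \beta \geq 0$. Because the first log-inequality application is strict, the full chain is strict, so $d\rho/dt < 0$ throughout the path and $w_0^{*} = 0$.
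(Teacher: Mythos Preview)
Your proof is correct. At the strategic level it coincides with the paper's: both argue that shifting weight from $w_0$ to $w_1$ strictly decreases $\rho$, and both reduce this (via the quotient rule) to the same inequality
\[
(1-\beta)\,p_\alpha\log(1/p_\alpha) \;>\; (1-\alpha)\,p_\beta\log(1/p_\beta).
\]
Where the two diverge is in how this inequality is established. The paper rewrites it as $g(\beta) > g(\alpha)$ for $g(x)=\tfrac{1+s(x)}{1-x}\log\tfrac{1+s(x)}{2}$ and shows $g$ is decreasing on $(0,1)$ by differentiating, then applying Lemma~\ref{lem:log_inequality} once to reduce to $s'(x)(1-x)+s(x)\le 1$, checked monomial by monomial. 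You instead compare at the two specific points $\alpha,\beta$ directly, applying Lemma~\ref{lem:log_inequality} twice to reach the algebraic inequality $\tfrac{1-p_\alpha}{1-p_\beta}\ge\tfrac{1-\alpha}{1-\beta}$, which you verify by factoring $1-\gamma^i$ term by term. Your route avoids the auxiliary derivative computation of $g$ and is a bit more self-contained; the paper's monotonicity-of-$g$ viewpoint gives a statement valid for all $x\in(0,1)$ rather than just the pair $\alpha,\beta$, which is marginally stronger but not needed here.
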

\begin{proof}
	If $w_{0} = 1$ we have $\rho = 1$ and it is clear that $\rho < 1$ if we set $w_{1} = 1$. 
	Suppose that $0 < w_{0}^{*} < 1$. 
	We will show that in this case we can move some weight from $w_{0}$ to $w_{1}$ and decrease the value of $\rho$.
	For a given weight assignment define $s(\alpha) = \sum_{i} \alpha^i w_i$ and write $w_{1}$ as $w_{1} = 1 - \sum_{j \neq i}w_j$. 
	The partial derivative of $\rho = \log((1 + s(\alpha))/2)/\log((1 + s(\beta))/2)$ with respect to $w_{0}$ is given by
	\begin{equation*}
		\frac{\partial \rho}{\partial w_{0}} = \frac{\frac{\partial s(\alpha) / \partial w_{0}}{1 + s(\alpha)}\log \frac{1 + s(\beta)}{2} 
		- \frac{\partial s(\beta) / \partial w_{0}}{1 + s(\beta)}\log \frac{1 + s(\alpha)}{2}}
		{\log^2 \frac{1 + s(\beta)}{2}}.
	\end{equation*}
	By rearranging and using that $\partial s(\alpha) / \partial w_{0} = 1 - \alpha$ we find that $\frac{\partial \rho}{\partial w_{0}} > 0$ is equivalent to  
	\begin{equation*} 
		\frac{1 + s(\beta)}{1 - \beta} \log \frac{1 + s(\beta)}{2} > \frac{1 + s(\alpha)}{1 - \alpha} \log \frac{1 + s(\alpha)}{2}.
	\end{equation*}
	It suffices to show that the function $g(x) = \frac{1 + s(x)}{1 - x} \log \frac{1 + s(x)}{2}$ is decreasing for $0 < x < 1$. 
	\begin{equation} \label{eq:condition}
		\frac{\partial g}{\partial x} = \frac{s'(x)(1 - x) + (1 + s(x))}{(1-x)^2} \log \frac{1 + s(x)}{2} + \frac{s'(x)}{1 - x}.
	\end{equation}
	Rewriting, this is equivalent to showing that 
	\begin{equation*}
		(s'(x)(1-x) + 1 + s(x)) \log \frac{1 + s(x)}{2} + (1-x)s'(x) < 0.
	\end{equation*}
	By the assumption that $0 < w_0 < 1$ we have that $0 < s(x) < 1$ and using Lemma \ref{lem:log_inequality} we get that $\log((1 + s(x))/2) < (s(x) - 1)/2$.
	The condition in equation \eqref{eq:condition} then simplifies to showing that $s'(x)(1-x) + s(x) \leq 1$.
	The function $s(x) = \sum_i w_i x^i$ is a weighted sum of simple monomials where the weights sum to one.
	It therefore suffices to show that the inequality holds for every monomial $s_k(x) = x^k$ where $k = \{0, 1, \dots, d \}$.
	For $k = 0$ and $k = 1$ we have $s_{k}'(x)(1-x) + s_{k}(x) = 1$ satisfying the desired inequality.
	For $k \geq 2$ we have $s_{k}'(x)(1-x) + s_{k}(x) = kx^{k-1} + (k-1)x^k$.
	We see that $s_{k}(0) = 0$ and $s_{k}(1) = 1$ and by inspecting the derivative of $s_{k}(x)$ we see that it is increasing for $x \in (0,1)$. 
	It follows that the inequality is satisfied, completing the proof.
\end{proof}
\subsection{A continuous optimization problem}
In order to simplify the problem of minimizing $\rho$ we will optimize over a larger space.
In particular we will let $W$ denote a collection of pairs $(w, \kappa)$ such that $\sum_{w \in W} w = 1$ where we restrict $\kappa \in \mathbb{R}$ to satisfy $\kappa \geq 1$.
We define $s(x) = \sum_{(w, \kappa) \in W} w x^\kappa$ and we will now attempt to specify the function $s$ that minimizes  
\begin{equation*}
	\rho_{\alpha, \beta} = \frac{\log \frac{1 + s(\alpha)}{2}}{\log \frac{1 + s(\beta)}{2}}
\end{equation*}
subject to the constraint that $s(\beta) = b \leq \beta$ is fixed. 
The constraint that $s(\beta) \leq \beta$ follows from the restrictions on $s$.
We can therefore write $b = \beta^\gamma$ for some $\gamma \geq 1$.
For fixed $s(\beta)$ it is clear that we minimize $\rho$ by maximizing $s(\alpha)$. 
\begin{lemma} \label{lem:gamma}
	For fixed $s(\beta) = \beta^\gamma$ we maximize $s(\alpha)$ by setting $s(x) = x^\gamma$.
\end{lemma}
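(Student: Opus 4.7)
The plan is to reduce the lemma to a single application of Jensen's inequality after a suitable change of variables. The key observation is that the constraint $s(\beta)=\beta^\gamma$ is linear in the weights $w$ (for fixed exponents $\kappa$), while the objective $s(\alpha)$ is also linear in the weights. The nontrivial structure lies in how the individual monomials $x^\kappa$ relate the values at $x=\beta$ and $x=\alpha$, and that relationship turns out to be a concave power function.

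Concretely, for each $(w,\kappa)\in W$ let $u=\beta^\kappa$. Since $0<\beta<\alpha<1$ we have $p := \log\alpha/\log\beta \in (0,1)$, and a direct computation gives
\begin{equation*}
  \alpha^\kappa = \beta^{\kappa\cdot\log\alpha/\log\beta} = u^{p}.
\end{equation*}
Thus if we write $\sum_{(w,\kappa)\in W}w\,u = s(\beta) = \beta^\gamma$, then
\begin{equation*}
  s(\alpha) = \sum_{(w,\kappa)\in W} w\,u^{p}.
\end{equation*}
Since $p\in(0,1)$ the function $u\mapsto u^{p}$ is strictly concave on $(0,\infty)$, so Jensen's inequality (applied to the probability distribution on $u$ defined by the weights $w$) gives
\begin{equation*}
  s(\alpha) \;=\; \sum w\,u^{p} \;\le\; \Bigl(\sum w\,u\Bigr)^{p} \;=\; \bigl(\beta^\gamma\bigr)^{p} \;=\; \beta^{\gamma\log\alpha/\log\beta} \;=\; \alpha^\gamma,
\end{equation*}
with equality if and only if all $u$ (equivalently all $\kappa$) take a common value, necessarily $\kappa=\gamma$. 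That extremal $W$ corresponds precisely to the choice $s(x)=x^\gamma$, proving the claim.

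The only subtle point is verifying that $p\in(0,1)$, which relies on $0<\beta<\alpha<1$ (both logarithms negative, with $|\log\alpha|<|\log\beta|$); this is the hypothesis we are operating under in this subsection. The rest is routine. I don't anticipate any real obstacle: once the substitution $u=\beta^\kappa$ is made, the problem collapses to concavity of $u\mapsto u^p$, and the equality condition of Jensen directly identifies the optimizer as the single-exponent distribution at $\kappa=\gamma$.
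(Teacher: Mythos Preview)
Your argument is correct and considerably cleaner than the paper's. The substitution $u=\beta^\kappa$ together with the identity $\alpha^\kappa=u^{p}$ for $p=\log\alpha/\log\beta\in(0,1)$ turns the constrained maximization into a one-line application of Jensen for the strictly concave map $u\mapsto u^{p}$, and the equality case of Jensen immediately pins down the unique optimizer $s(x)=x^\gamma$.

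By contrast, the paper proves the lemma via a local exchange argument: if not all weight sits at $\gamma$, pick exponents $\gamma_0<\gamma<\gamma_1$ carrying positive weight and shift mass $\varepsilon_0,\varepsilon_1$ from them onto $\gamma$ while preserving $s(\beta)=\beta^\gamma$; the resulting change in $s(\alpha)$ is positive provided the ratio $\varphi(x)=(x^{\gamma_0}-x^\gamma)/(x^\gamma-x^{\gamma_1})$ is decreasing on $(0,1)$, which is then verified by differentiating and invoking the inequality $\log z\le z-1$. Your approach bypasses all of this calculus: the exchange argument is, in effect, a hands-on proof of the concavity that Jensen packages for you. The paper's route has the minor advantage of staying entirely within elementary real analysis (no named inequality), but your version is shorter, identifies the structural reason the lemma holds, and gives the uniqueness of the maximizer for free. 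Note that both arguments implicitly use $w\ge 0$ (needed for Jensen, and assumed when the paper speaks of ``positive weight''), which is consistent with the origin of the $w_i$ as expected Fourier weights.
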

\begin{proof}
Let $w$ denote the weight on the exponent $\gamma$ in the specification $W$ of $s$. 
We will prove that if $w < 1$ then we can increase $s(\alpha)$ by rearranging the weights of $s$ to put more weight onto $(w, \gamma)$.
Note that if $w < 1$ and we have a valid configuration of weights (in the sense that $s(\beta) = \beta^\gamma$) there must exist exponents $\gamma_{0} < \gamma < \gamma_{1}$ such that there is positive weight on $\gamma_{0}$ and $\gamma_{1}$. 
If all the remaining weight was concentrated to either side of $\gamma$ the condition $s(\beta) = \beta^\gamma$ would be violated.
We will now move $\varepsilon_{0}$ weight from $w_{0}$ to $w$ and $\varepsilon_{1}$ weight from $w_{1}$ to $w$ where we set $\varepsilon_{0}, \varepsilon_{1}$ to ensure that $s(\beta) = \beta^\gamma$ after the move.
It turns out that this condition is satisfied for the following ratio
\begin{equation*}
	\varphi(\beta) = \varepsilon_{1} / \varepsilon_{0} = \frac{\beta^{\gamma_{0}} - \beta^\gamma}{\beta^\gamma - \beta^{\gamma_{1}}} > 0.
\end{equation*}
The change in $s(\alpha)$ due to the rearrangement of weights can be shown to be positive if $\varphi(\alpha) < \varphi(\beta)$.
Therefore, it suffices to show that $\varphi(x)$ is decreasing for $0 < x < 1$ when $\gamma_{0} < \gamma < \gamma_{1}$.
To simplify further, we define $\lambda_{0} = \gamma_{0} - \gamma$ and $\lambda_{1} = \gamma_{1} - \gamma$ which satisfy $\lambda_{0} < 0 < \lambda_{1}$.
Rewriting $\varphi(x) = -(1 - x^\lambda_{0})/(1 - x^\lambda_{1})$ and differentiating we get
\begin{align*}
	\frac{\partial \varphi}{\partial x} &= \lambda_{0} x^{\lambda_{0}}(1 - x^{\lambda_{1}} - \lambda_{1} x^{\lambda_{1}}(1-x^{\lambda_{0}}) < 0 \\
										&\iff \frac{\lambda_{0} x^{\lambda_{0}}}{1 - x^{\lambda_{0}}} > \frac{\lambda_{1} x^{\lambda_{1}}}{1 - x^{\lambda_{1}}}. 
\end{align*}
It suffices to show that $\psi(x) = \frac{xa^x}{1- a^x}$ is decreasing in $x$ for $a \in (0,1)$.
We have that $\psi'(x) = a^{x}(1 - a^{x}) + a^x \log a^x$. 
Define $z = a^x$ and note that $z > 0$ and $z \neq 1$. 
We have that $z(1-z) + z \log z < 0 \iff (1 - z) + \log z < 0$ and by Lemma \ref{lem:log_inequality} we see that $\log z < z - 1$, completing the proof. 
\end{proof}
\subsection{Univariate analysis}
According to Lemma \ref{lem:gamma} we can now restrict our attention to the problem of finding $\gamma \geq 1$ that minimizes the function
\begin{equation*}
	\rho(\gamma) = \frac{\log \frac{1 + \alpha^\gamma}{2}}{\log \frac{1 + \beta^\gamma}{2}} \,.
\end{equation*}
We will show the derivative of $\rho$ is positive, implying that it is minimized when $\gamma = 1$.
\begin{lemma} \label{lem:derivative_gamma}
	$\rho'(\gamma) > 0$.
\end{lemma}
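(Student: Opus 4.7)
My plan is to reduce the inequality $\rho'(\gamma)>0$ to the monotonicity of a single-variable function, and then prove that monotonicity by two applications of Lemma~\ref{lem:log_inequality}. Writing $\rho=\phi_\alpha/\phi_\beta$ with $\phi_t(\gamma)=\log((1+t^\gamma)/2)<0$, logarithmic differentiation gives $\rho'/\rho=\phi_\alpha'/\phi_\alpha-\phi_\beta'/\phi_\beta$. Computing $\phi_t'(\gamma)=(t^\gamma\log t)/(1+t^\gamma)$ and substituting $u=t^\gamma$ (so that $\log t=(\log u)/\gamma$) yields $\phi_t'/\phi_t=h(t^\gamma)/\gamma$, where $h(u)=\frac{u\log u}{(1+u)\log((1+u)/2)}$. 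Since $\rho>0$ and $\alpha>\beta$ implies $\alpha^\gamma>\beta^\gamma$, the claim $\rho'(\gamma)>0$ reduces to showing that $h$ is strictly increasing on $(0,1)$.

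To prove this monotonicity, I will write $h=A/B$ with $A(u)=u\log u$ and $B(u)=(1+u)\log((1+u)/2)$, both negative on $(0,1)$. Using $A'(u)=1+\log u$ and $B'(u)=1+\log((1+u)/2)$, the quotient rule together with the cancellation $(1+u)-u=1$ yields $A'B-AB'=F(u)$, where $F(u)=\log u\log((1+u)/2)+(1+u)\log((1+u)/2)-u\log u$. Since $B^2>0$, it suffices to show $F(u)>0$ on $(0,1)$. I will observe that $F(1)=0$ and prove $F'(u)<0$ on $(0,1)$; together these imply $F>0$ on the interval.

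A short differentiation of $F$ gives $F'(u)=(1+u)\log((1+u)/2)/u-u\log u/(1+u)$, and multiplying through by $u(1+u)>0$ shows that $F'(u)<0$ is equivalent to the clean inequality $u^2|\log u|<(1+u)^2|\log((1+u)/2)|$. This is where Lemma~\ref{lem:log_inequality} does the work: applied with $x=1/u$ it gives $|\log u|<(1-u)/u$, and applied with $x=(1+u)/2$ it gives $|\log((1+u)/2)|>(1-u)/2$ (by negating $\log((1+u)/2)\leq(u-1)/2$). Chaining these bounds with the trivial polynomial inequality $2u<(1+u)^2$ (i.e., $1+u^2>0$) yields $u^2|\log u|<u(1-u)<(1+u)^2(1-u)/2<(1+u)^2|\log((1+u)/2)|$, as required.

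The main conceptual obstacle is deciding to prove $F'(u)<0$ rather than $F(u)>0$ directly: $F$ vanishes to second order at $u=1$, so any single first-order application of Lemma~\ref{lem:log_inequality} to $F$ itself is too loose. By passing to the derivative, the two logarithms in $F$ decouple so that each can be bounded separately by Lemma~\ref{lem:log_inequality}, and the residual polynomial inequality becomes trivial.
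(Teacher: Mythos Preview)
Your argument is correct. The reduction and the paper's are in fact the same up to a change of variables: with $u=x^{\gamma}$ the paper's function satisfies $g(x)=\gamma/h(x^{\gamma})$, so ``$g$ decreasing in $x$'' and ``$h$ increasing in $u$'' are equivalent statements. Your substitution $u=t^{\gamma}$ is a genuine simplification, since it removes the parameter $\gamma$ from the monotonicity problem entirely, whereas the paper carries $\gamma$ through the computation of $g'(x)$.

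The real difference is in how the monotonicity is established. The paper bounds the numerator of $h'$ (equivalently your $F$) directly: it splits on the sign of $1+x^{\gamma}+\gamma\log x$, and in the nontrivial case applies Lemma~\ref{lem:log_inequality} first to $\log\frac{1+x^{\gamma}}{2}$ and then to $\log x^{\gamma}$. You instead observe $F(1)=0$ and pass to $F'$; after multiplying by $u(1+u)$ the two logarithms decouple into the clean comparison $u^{2}\lvert\log u\rvert<(1+u)^{2}\lvert\log\tfrac{1+u}{2}\rvert$, which the two applications of Lemma~\ref{lem:log_inequality} (at $x=1/u$ and $x=(1+u)/2$) together with $2u<(1+u)^{2}$ dispatch without any case analysis. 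Both routes use the log inequality twice; your extra differentiation trades the paper's sign split for a trivially true polynomial inequality, which is arguably tidier. One cosmetic remark: since $\phi_{\alpha},\phi_{\beta}<0$, ``logarithmic differentiation'' should be read as differentiating $\log\lvert\rho\rvert$ (or just the quotient rule), but the resulting identity $\rho'/\rho=\phi_{\alpha}'/\phi_{\alpha}-\phi_{\beta}'/\phi_{\beta}$ is of course valid.
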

\begin{proof}
From inspecting the derivative of $\rho$ with respect to $\gamma$ we see that
\begin{align*}
&\frac{\partial \rho}{\partial \gamma} > 0 \\
		&\iff \frac{\alpha^{\gamma} \log \alpha}{1 + \alpha^{\gamma}} \log \frac{1 + \beta^\gamma}{2} - \frac{\beta^{\gamma} \log \beta}{1 + \beta^{\gamma}} \log \frac{1 + \alpha^\gamma}{2} > 0 \\ 
		&\iff \frac{1 + \beta^{\gamma}}{\beta^{\gamma} \log \beta} \log \frac{1 + \beta^\gamma}{2} >  \frac{1 + \alpha^{\gamma}}{\alpha^{\gamma} \log \alpha} \log \frac{1 + \alpha^\gamma}{2}.  
\end{align*}
Therefore it suffices to show that the function $g(x) = \frac{1 + x^{\gamma}}{x^{\gamma} \log x} \log \frac{1 + x^\gamma}{2}$ is decreasing for $0 < x < 1$ and $\gamma \geq 1$.
From inspecting $g'(x)$ we see that the condition that $g'(x) < 0$ is equivalent to
\begin{equation*}
-(1 + x^\gamma + \gamma \log x) \log \frac{1 + x^\gamma}{2} + \gamma x^{\gamma} \log x < 0 
\end{equation*}
If $1 + x^\gamma + \gamma \log x \geq 0$ then the condition is satisfied and we are done.
Otherwise we can use the fact that $-(1 + x^\gamma + \gamma \log x) > 0$ together with Lemma \ref{lem:log_inequality} to produce following derivation: 
\begin{align*}
&-(1 + x^\gamma + \gamma \log x) \log \frac{1 + x^\gamma}{2} + \gamma x^{\gamma} \log x \\
&\qquad < -(1 + x^\gamma + \log x^\gamma)\frac{x^\gamma - 1}{2} + x^\gamma \log x^\gamma \\
&\qquad = 1 - x^{2\gamma} + (1 + x^\gamma) \log x^\gamma 
\end{align*}
Reapplying Lemma \ref{lem:log_inequality} we see that $(1 + x^\gamma) \log x^\gamma < (1 + x^\gamma)(x^\gamma - 1) = -(1 - x^{2\gamma})$ completing the proof. 
\end{proof}
\subsection{Stating the result}
We will now summarize how the results from the previous subsections yield the main result of this paper as stated in the abstract. 
To find the the distribution over Boolean functions minimizing $\rho$ we first considered the optimal weight assignment in the expression $s(x) = \sum_i w_i \alpha^i$ subject to the constraint that $\sum_i w_i = 1$. 
Finding an optimal assignment does not guarantee that we have solved the problem, because there may not exist a Boolean function with a given weight assignment, 
but if one or more Boolean functions that satisfy the optimal assignment exists we will have solved the problem. 
In Lemma \ref{lem:w_0} we showed that an optimal solution $w_0^*, w_1^*, \dots w_d^*$ must have $w_0^* = 0$.
Therefore the optimal solution can only have non-zero weight on exponents $k \geq 1$.
Next, in Lemma \ref{lem:gamma}, we argued that if we allow continuous exponents $k \in \mathbb{R}$ with $k \geq 1$ in $s(x)$ then the problem of minimizing $\rho$ becomes the problem of selecting $\gamma \geq 1$ where $s(x) = x^\gamma$.
Lemma \ref{lem:derivative_gamma} showed that $\rho(\gamma)$ is increasing, so to minimize $\rho$ we want to set $\gamma = 1$.
The conclusion from these optimization problems is that we minimize $\rho$ by setting $w_1^* = 1$.
Finally Lemma \ref{lem:dictator} shows that the subset of the Boolean functions with $w_1 = 1$ is exactly the set of dictator functions $f(x) = \pm x_i$. 
Together with the fact that $w_1^* = 1$ is a unique minimum of $\rho$ in the weight assignment problem we get our main result.
\section{Open problems}
\paragraph{Orthogonal search.}
It appears that the same techniques can be used to show that pairs of functions of the form $f(x) = x_i x_j$, $g(y) = -x_i x_j$ minimize the function
\begin{equation*}
	\frac{\log(1/\min(p_{\alpha},p_{-\alpha}))}{\log(1/\max(p_{\beta}, p_{-\beta}))}.
\end{equation*}

\paragraph{Extension to negative correlation.}
It seems likely that the dictator functions or bit-sampling minimizes $\rho$ for the entire interval $-1 \leq \beta < \alpha \leq 1$. 
Unfortunately the current proof breaks down in places.

\paragraph{General hash functions.}
Showing tight bounds for hash function with an arbitrary range is an interesting open problem.
For orthogonal search this is an open problem even in the case of $\rho_{\alpha, 0}$.
For more information see the symmetric Gaussian problem in \cite{odonnell2012open}. 

Investigating what the implications of the results in this paper for functions with an arbitrary range through the use of $1$-bit hashing is an interesting problem.

\bibliography{bool}
\bibliographystyle{abbrv}
\end{document}